\date{May 23, 2014}

\documentclass[12pt]{amsart}
\usepackage{latexsym,amsmath,amsfonts,amscd,amssymb}
\usepackage{graphics}
\textwidth 6in \oddsidemargin.2in \evensidemargin.2in
\parskip.2cm
\textheight20cm
\baselineskip.6cm

\newtheorem{theorem}{Theorem} 
\newtheorem{lemma}[theorem]{Lemma}

\newtheorem{corollary}[theorem]{Corollary}

\theoremstyle{remark}



\newcommand{\cM}{{\mathcal M}}

\newcommand{\cS}{{\mathcal S}}

\newcommand{\cW}{{\mathcal W}}

\newcommand{\RR}{{\mathbb R}}

\newcommand{\ZZ}{{\mathbb Z}}

\title{A simple dynamical model leading to Pareto wealth distribution and stability}

\subjclass[2010]{Primary: 91B55, 91B82, 91A60 .}
\keywords{Pareto, distribution, wealth.}

\author[R. P\'{e}rez Marco]{Ricardo P\'{e}rez Marco}
\address{CNRS, LAGA UMR 7539, Universit\'e Paris XIII,
99, Avenue J.-B. Cl\'ement, 93430-Villetaneuse, France}
\normalsize\email{ricardo.perez.marco@gmail.com}


\begin{document}
\begin{abstract}
We propose a simple dynamical model of wealth evolution. The invariant
distributions are of Pareto type and are dynamically stable as conjectured by Pareto.
\end{abstract}

\maketitle

\section{Introduction.}

At the end of the XIXth century, in his studies of wealth and 
income \footnote{Wealth and income are proxies of each other in first approximation for our purposes.} 
distribution on different countries, Vilfredo 
Pareto (\cite{P1}, \cite{P2}) discovered 
the universal power law that governs the upper tail of wealth distribution. It is well known that this is not a 
good model for the lower part of the curve that is more dependent on specific sociological factors and of log-normal type 
(see the discusion in \cite{Ma}).  
The exponent in the power decay is country dependent 
and is an indicator of equitative wealth (re)distribution. A larger exponent indicates a more equitative wealth distribution. 
Pareto's universal assymptotic behaviour appears in distributions
from various other contexts, and, as we show, is typical from competitive system where the reward is proportional to the accumulated 
wealth. The purpose of this article is to provide a simple explanation to Pareto's empirical observation. We propose 
a natural dynamical model of evolution of wealth where Pareto distributions emerge as invariant dynamically stable 
\footnote{``Dynamically stable distribution'' in the Dynamical System sense not in the probabilistic sense.}
distributions of this Dynamical System. The stability of the wealth distribution, which is different from the universality property, 
was conjectured by Pareto, whose intuition apparently comes from his empirical observations. We can read in \cite{P2}, chap. VII, point 31, p.393:

\medskip

\textit{Si, par exemple, on enlevait tout leur revenu aux citoyens les plus riches, en supprimant la queue de la figure des revenus, celle-ci
ne conserverait pas cette forme, mais t\^ot ou tard elle se r\'etablirait suivant une forme semblable \`a la premi\`ere.}\footnote{\textit{``If, for instance, we confiscate all income to the richests citizens, thus erasing the tail of income distribution, this shape will not persist and sooner or later it will evolve to a similar shape of the original.''}}

\medskip

There are other classical models and studies of Pareto empirical observation and power laws (like Zipf's law). For the record 
we cite a few classical ones: H. Simon \cite {Si}, D.G. Champernowne \cite{Ch}, B. Mandelbrot \cite{Ma},etc Simon
model \cite{Si} for Zipf's law is a ``genesis model'' of the distribution, i.e. it is a model for its creation.  
Champernowne \cite{Ch} proposed a general multiplicative stochastic model, and B. Mandelbrot \cite{Ma} explained Pareto law 
by the universal limit character of Pareto-L\'evy probabilistically stable distributions.

\section{The dynamical model.}

In this first section, we propose and study a dynamical model of wealth evolution which is a simple first 
approximation. 

\subsection{Setup.}

Let $f(x)$ be the wealth distribution, i.e. $df= f(x) \  dx$ is the number of individuals with wealth 
in the infinitesimal interval $[x, x+dx[$. The 
distribution function 
$f : \RR_+ \to \RR_+$ is continuous, positive and decreasing and $\lim_{x\to +\infty} f(x) =0$. A distribution is of Pareto type 
if it presents a power law decay $x^{-\alpha}$ at $+\infty$, that is
$$
\lim_{x\to +\infty} -\frac{\log f(x)}{\log x} =\alpha >0 \ .
$$
The exponent $\alpha >0$ is \textit{Pareto exponent}. A distribution of the form $f(x) = C. x^{-\alpha}$ is called a Pareto distribution.
Smaller values of $\alpha$ indicate larger inequalities in wealth distribution. Notice that $\alpha >1$ is necessary for the distribution to 
be summable at $+\infty$, i.e. finite wealth at infinite (finitness near $0$ is not significant since the model aims to explain the tail behaviour at $+\infty$).

\subsection{Wealth dynamics.}

We focuss on the evolution of individual wealth. We assume that the evolution is based on two main factors:  Finantial decisions, that we model 
as a betting game, and by public redistribution of wealth, that absorbs part of the individual wealth into public wealth.

For the first factor we model the finantial decisions of each individual   
by a sequence of bets. Each financial decision turns out to be 
a bet, waging a proportion of his wealth. As a first approximation, we assume that the 
probability of success is the same for all agents and bets $0<p<1$ (this is the average probability). At each round, 
each agent risks the same percentage of his wealth, a fraction $\gamma >0$ (that is also an average). If 
he wins, his wealth is multiplied by the factor $1+\gamma$ and if he looses his wealth is divided by $1+\gamma$.

Only considering this first factor, one round evolution the distribution transforms into the new distribution
$$
\cW (f) (x) = \frac{p}{1+\gamma} \ f( x/(1+\gamma)) + (1-p)(1+\gamma ) \ f( (1+\gamma) x) \ .
$$
The operator $\cW$ is ``wealth preserving''. In terms of $L^1$-norm we have
$$
||\cW (f)||_{L^1} = ||f||_{L^1} \ .
$$
The agents will only risk their capital if there is a positive expectation of gain, thus we should assume that $p>1/2$.

There are other mechanisms that affect wealth evolution that we should consider, 
as for example inheritances that divide wealth, taxes, etc. Note that public wealth drains individual wealth by the 
fiscal mechanism.
Thus it is natural to consider a broader class of operators $\cW$ with a dissipative parameter $\kappa \geq 1$, the \textit{dissipative coefficient},
$$
\cW_\kappa (f) (x) = \frac{1}{\kappa} \cW (f) (x)=\frac{p}{\kappa (1+\gamma)}f \left ( x/(1+\gamma) \right ) + \frac{(1-p)(1+\gamma)}{\kappa} f( (1+\gamma)x) \ ,
$$
so that for $\kappa=1$ the operator is wealth preserving. We name the model for $\kappa =1$ the ``wealth preserving model''.

\subsection{Invariant distributions.}

Distributions invariant by the evolution operator $\cW_\kappa$  must satisfy 
the fixed point functional equation $\cW_\kappa (f)=f$, that is,
\begin{equation} \label{functional_eq0}
f(x)= \frac{p}{\kappa (1+\gamma)}f \left ( x/(1+\gamma) \right ) + \frac{(1-p)(1+\gamma)}{\kappa} f( (1+\gamma)x)   \ .
\end{equation}
We solve this equation in the next section. 

\subsection{Solution of the functional equation.}

Considering the change of variables $F(x)=f(e^x)$, equation (\ref{functional_eq0}) 
becomes a functional equation for $F:\RR \to \RR$
\begin{equation} \label{functional_eq}
a \ F(x+\lambda) - F(x) +b \ F(x-\lambda) =0  \ ,
\end{equation}
where $\lambda = \log (1+\gamma ) >0$, $a=(1-p)(1+\gamma )/\kappa > 0$ and $b=p/\kappa /(1+\gamma)  > 0$.

We have a general theory of these type of functional equations. L. Schwartz (see \cite{S}, and also \cite{M}, \cite{K})
studied more general ``mean periodic'' smooth functions $F$ 
that satisfy a functional equation of the form
$$
\omega \star F = 0 \ ,
$$
where $\omega$ is a compactly supported distribution. In our case, $\omega = a \ \delta_\lambda -\delta_0 + b \ \delta_{-\lambda}$\footnote{The way to study these equations is by Fourier transforming it (\`a la Carleman \cite{C} using hyperfunctions
in order to work in sufficient generality).
One of the general results by L. Schwartz (see \cite{S} Theorem 10 p.894) is the ``spectral synthesis'' of solutions: Smooth solutions are uniform 
limits on compact set of $\RR$ of linear combinations of exponential solutions $(e^{\rho x})_\rho$. Also these exponential 
solutions are not limits of linear combinations of the others, 
thus the expansion is unique.}.
\medskip

In our simplified model we don't need the general theory and the equation can be solved by elementary 
means. 
First, the exponential solutions are easy to calculate. 
A function $F(x)=e^{\rho x}$ is a solution if $e^{\rho \lambda}$ satisfies
the following second degree equation:
\begin{equation}\label{2nd_degree}
a \left (e^{\rho \lambda}\right )^2 - \left (e^{\rho \lambda}\right ) +b =0 \ .
\end{equation}
Observe that the discriminant $\Delta = 1-4ab$ is positive since we have
$$
ab=\frac{p(p-1)}{\kappa^2}< \frac{1}{4\kappa^2}  \ ,
$$
thus $\Delta > 1-\frac{1}{\kappa^2} > 0$ because $\kappa \geq 1$. 

Thus we have two distinct solutions:

$$
e^{\rho \lambda} = \frac{1}{2a} \pm \frac{1}{2a} \sqrt{1-4ab} \ . 
$$

Since $a>0$ and the polynomial $P(x)=ax^2-x+b$ satisfies $P(0) >0$ 
and $P(1)<0$, we have two real root $x_1$ and $x_2$ with $0<x_1 < 1 < x_2$.
Therefore, we have two families of solutions for $\rho$ in two vertical lines in the complex domain, for $k\in \ZZ$, $j=0,1$, 
$$
\rho_{j,k} = \lambda^{-1}\log x_j  + 2 \pi i k \lambda^{-1} \ .
$$
Note that $\Re \rho_{0,k} < 0 < \Re \rho_{1,k}$. Let $\rho_0= \rho_{0,0} <0$ and $\rho_1=\rho_{1,0}>0$.
Observe that the particular solution $F(x)= C.e^{\rho_{0} x}$  
leads to the solution $f(x)=F(\log x)=C. x^{\rho_0}$ which is exactly Pareto distribution with Pareto 
exponent $\alpha = -\rho_0$.  

We can now solve the functional equation completely\footnote{We have a strong form of Schwartz spectral theorem.}:

\begin{theorem}
The general solution of the functional equation (\ref{functional_eq}),
\begin{equation} 
a \ F(x+\lambda) - F(x) +b \ F(x-\lambda) =0  \ ,
\end{equation}
(with $a, b, \lambda$ as above) is 
$$
F(x) = e^{\rho_0 x} L_0(x/\lambda) +e^{\rho_1 x} L_1(x/\lambda)
$$
where $L_0$ and $L_1$ are $\ZZ$-periodic functions.
\end{theorem}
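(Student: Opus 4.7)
My plan is to factor the linear difference operator using the two roots $x_1 = e^{\rho_0 \lambda}$ and $x_2 = e^{\rho_1 \lambda}$ of $aX^2 - X + b = 0$, reducing the second-order equation to two first-order equations of multiplicative type, each of which admits an explicit solution.

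Writing $T_\lambda F(x) = F(x+\lambda)$, the shifted form of (\ref{functional_eq}) reads $a F(x+2\lambda) - F(x+\lambda) + b F(x) = 0$, i.e.\
$$a(T_\lambda - x_1)(T_\lambda - x_2) F = 0$$
by the very definition of $x_1, x_2$ as roots of $aX^2 - X + b$. The key reduction step is to set
$$G(x) = (T_\lambda - x_2)F(x) = F(x+\lambda) - x_2 F(x),$$
so $G$ satisfies the first-order equation $G(x+\lambda) = x_1 G(x)$.

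I would then solve the two first-order equations separately. For $G(x+\lambda) = x_1 G(x)$ with $x_1 > 0$, put $G(x) = e^{\rho_0 x} H(x)$; using $e^{\rho_0 \lambda} = x_1$ the equation collapses to $H(x+\lambda) = H(x)$, so continuous solutions are exactly $G(x) = e^{\rho_0 x} H(x)$ with $H$ $\lambda$-periodic. It remains to solve the inhomogeneous equation
$$F(x+\lambda) - x_2 F(x) = e^{\rho_0 x} H(x).$$
Its homogeneous part $F(x+\lambda) = x_2 F(x)$ has (by the same argument) the general solution $e^{\rho_1 x} L_1(x)$ with $L_1$ $\lambda$-periodic, and a particular solution is obtained by the ansatz $F_p(x) = e^{\rho_0 x} K(x)$ with $K$ $\lambda$-periodic: substitution gives $e^{\rho_0 x}(x_1 - x_2) K(x) = e^{\rho_0 x} H(x)$, and because $\Delta = 1 - 4ab > 0$ the roots satisfy $x_1 \neq x_2$, so $K(x) = H(x)/(x_1 - x_2)$ is well defined and $\lambda$-periodic. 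Summing, $F(x) = e^{\rho_0 x} L_0(x) + e^{\rho_1 x} L_1(x)$ with $L_0, L_1$ $\lambda$-periodic; rescaling by $\lambda$ turns them into $\mathbb{Z}$-periodic functions of $x/\lambda$. Conversely, any such $F$ is a solution by direct substitution.

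The only subtle point in the plan is ensuring that the factorization trick captures \emph{all} (continuous) solutions, not just a privileged family: this is guaranteed because $F \mapsto G = (T_\lambda - x_2)F$ followed by the inhomogeneous solve is invertible on continuous functions (injectivity is handled by the $e^{\rho_1 x} L_1$ freedom, surjectivity by the existence of $F_p$, using $x_1 \neq x_2$). I do not anticipate a real obstacle beyond this bookkeeping, since the positivity of $x_1, x_2$ makes the passage from multiplicative $\lambda$-periodicity to exponential-times-periodic representation completely explicit.
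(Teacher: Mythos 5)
Your proof is correct and follows essentially the same route as the paper: both factor the second-order difference equation through the two roots $x_1=e^{\rho_0\lambda}$, $x_2=e^{\rho_1\lambda}$ of $aX^2-X+b$, reduce to a first-order equation whose solutions are an exponential times a periodic function, and then solve the resulting inhomogeneous first-order equation, using $\Delta=1-4ab>0$ (i.e.\ $x_1\neq x_2$) to rule out the degenerate case. The only cosmetic difference is that you dispatch the inhomogeneous step with a particular-solution ansatz $e^{\rho_0 x}K(x)$, whereas the paper substitutes $\hat F(x)=e^{-\rho_0 x}F(x)$ and invokes a small telescoping lemma for $G(x+\lambda)-G(x)=e^{cx}L(x/\lambda)$.
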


In order to solve the functional equation (\ref{functional_eq}), 
we consider $H(x) =F(x+\lambda)-e^{\rho_0 \lambda} F(x)$. Substracting 
(\ref{functional_eq}) from (\ref{2nd_degree})  multiplied by $e^{-\rho_0 \lambda} F(x)$ we get
$$
aH(x)-be^{-\rho_0 \lambda} H(x-\lambda) =0 \ ,
$$
or
$$
H(x)=\left (\frac{b}{a} \  e^{-\rho_0 \lambda} \right ) H(x-\lambda) \ .
$$
Considering 
$$
\hat H(x) =\left (\frac{b}{a} \ e^{-\rho_0 \lambda} \right )^{-x/\lambda} H(x) \ ,
$$
we have that $\hat H(x)=\hat H(x-\lambda)$, i.e. there is a $\ZZ$-periodic function $L$ such that
$$
H(x)= \left (\frac{b}{a} \ e^{-\rho_0 \lambda} \right )^{x/\lambda} L(x/\lambda) \ .
$$
Therefore we have
$$
F(x+\lambda)-e^{\rho_0 \lambda} F(x)= \left (\frac{b}{a} \ e^{-\rho_0 \lambda} \right )^{x/\lambda} L(x/\lambda) \ .
$$
Now, put
$$
\hat F(x) = e^{-\rho_0 x} F(x) \ .
$$
Then we need to solve
$$
\hat F(x+\lambda) - \hat F(x) = e^{-\rho_0 \lambda} \ \left (\frac{b}{a} \right )^{x/\lambda} \  e^{-2\rho_0 x} L(x/\lambda) \ ,
$$
if we write $G(x)=e^{\rho_0 \lambda} \hat F(x)$ and $c=-2\rho_0+\lambda^{-1} \log(b/a)$,
$$
G(x+\lambda) - G(x) =e^{cx} L(x/\lambda) \ .
$$
We use the following lemma:

\begin{lemma}
For $c\in \RR$, $\lambda >0$, and $L$ a $\ZZ$-periodic function, the solutions of the functional equation
\begin{equation}
 G(x+\lambda)-G(x) = e^{cx} L(x/\lambda)\ ,
\end{equation}
are of the form
$$
G(x)= G_0(x)+M(x/\lambda) \ ,
$$
where $M$ is a $\ZZ$-periodic function, and for $c\not= 0$,
$$
G_0(x)= \frac{e^{cx}}{e^{c\lambda}-1} \  L(x/\lambda) \ ,
$$
and for $c=0$ 
$$
G_0(x)= \lambda^{-1} x \ L(x/\lambda) \ .
$$
\end{lemma}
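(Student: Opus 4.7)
The plan is to reduce the problem to finding a single particular solution, exploiting linearity. The equation is affine linear in $G$, so the general solution equals any particular solution plus the general solution of the homogeneous equation
$$
G(x+\lambda) - G(x) = 0.
$$
Solutions of the homogeneous equation are exactly the $\lambda$-periodic functions of $x$, which can be written as $M(x/\lambda)$ with $M$ a $\ZZ$-periodic function. Hence it suffices to exhibit one explicit particular solution $G_0$ in each of the cases $c\neq 0$ and $c=0$.

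For $c\neq 0$, I would \emph{guess} an ansatz of the form $G_0(x) = A(x) \, L(x/\lambda)$ with $A(x)$ an exponential multiplier, motivated by the fact that the right-hand side is the product of an exponential and a $\lambda$-periodic factor. Setting $A(x) = e^{cx}/(e^{c\lambda}-1)$ and using the $\ZZ$-periodicity $L((x+\lambda)/\lambda) = L(x/\lambda + 1) = L(x/\lambda)$, a direct substitution gives
$$
G_0(x+\lambda) - G_0(x) = \frac{e^{c(x+\lambda)} - e^{cx}}{e^{c\lambda}-1}\, L(x/\lambda) = e^{cx}\, L(x/\lambda),
$$
as required. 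The denominator $e^{c\lambda}-1$ is nonzero precisely because $c\neq 0$ (and $\lambda > 0$), so this step is valid only in this regime.

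For $c=0$ the previous ansatz degenerates (the denominator vanishes), which signals resonance: one needs a polynomial factor in $x$ instead. I would try $G_0(x) = \lambda^{-1} x\, L(x/\lambda)$. Using periodicity of $L$ again,
$$
G_0(x+\lambda) - G_0(x) = \lambda^{-1}\bigl[(x+\lambda) - x\bigr]\, L(x/\lambda) = L(x/\lambda),
$$
matching the right-hand side since $e^{0\cdot x}=1$. Combining the particular solution with the homogeneous solutions yields the stated form $G(x) = G_0(x) + M(x/\lambda)$.

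The only real point of care is the step identifying the homogeneous solutions with functions of the form $M(x/\lambda)$; here one must be explicit that no regularity is assumed on $L$ or $M$ beyond $\ZZ$-periodicity, so the argument is purely formal and no existence/convergence issue arises. Everything else is a routine verification, with the case distinction $c=0$ versus $c\neq 0$ being the conceptually meaningful feature (reflecting whether the ``frequency'' $c$ resonates with the shift $\lambda$).
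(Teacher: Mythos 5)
Your proof is correct and follows essentially the same route as the paper: exhibit $G_0$ as a particular solution (you verify explicitly what the paper calls obvious) and observe that the homogeneous equation $G(x+\lambda)=G(x)$ is solved exactly by the $\lambda$-periodic functions $M(x/\lambda)$ with $M$ being $\ZZ$-periodic. No gaps.
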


\begin{proof}
Obviously in both cases $G_0$ is a particular solution. Then the functional equation 
is equivalent to $M(x+1)-M(x)=0$, where $M(x)=G(\lambda x)-G_0(\lambda x)$,  i.e. $M$ is $\ZZ$-periodic.
\end{proof}

So, in the non-degenerate case ($c\not= 0$), absorbing the multiplicative constants into $L$ and $M$, 
the general solutions of (\ref{functional_eq}) are of the form
$$
F(x) = e^{(-\rho_0 +\lambda^{-1} \log (b/a) ) x} L(x/\lambda) +e^{\rho_0 x} M(x/\lambda) \ .
$$
And coming back to the second degree equation (\ref{2nd_degree}) we have
$$
e^{-\rho_0 \lambda} \ \frac{b}{a}=e^{\rho_1 \lambda} \ ,
$$
so
$$
F(x) = e^{\rho_1 x} L(x/\lambda) +e^{\rho_0 x} M(x/\lambda) \ .
$$
Indeed the degenerate case never happens:

\begin{lemma}
We have $c\not=0$.
\end{lemma}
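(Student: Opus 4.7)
The plan is to identify $c$ explicitly in terms of $\rho_0$ and $\rho_1$ by exploiting the relation between the coefficients $a,b$ of the quadratic (\ref{2nd_degree}) and its roots $x_1 = e^{\rho_0\lambda}$, $x_2 = e^{\rho_1\lambda}$. Vieta's formulas give
$$
x_1 x_2 = \frac{b}{a},
$$
which, upon taking logarithms, reads
$$
(\rho_0+\rho_1)\lambda = \log(b/a), \qquad\text{i.e.,}\qquad \lambda^{-1}\log(b/a) = \rho_0+\rho_1.
$$

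Substituting into the definition of $c$,
$$
c = -2\rho_0 + \lambda^{-1}\log(b/a) = -2\rho_0 + (\rho_0+\rho_1) = \rho_1 - \rho_0.
$$
Since the excerpt already records $\rho_0 < 0 < \rho_1$ (from $0 < x_1 < 1 < x_2$), one concludes $c = \rho_1 - \rho_0 > 0$, and in particular $c \neq 0$.

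There is no real obstacle here; the lemma is essentially a bookkeeping check that the constant $c$ appearing when one iterates the reduction of the functional equation is precisely the gap $\rho_1 - \rho_0$ between the two characteristic exponents. The only thing worth flagging is that this identity also gives the nicer final form of the general solution $F(x) = e^{\rho_1 x}L(x/\lambda) + e^{\rho_0 x}M(x/\lambda)$ announced in the theorem, which is presumably why the authors isolate this lemma at this precise spot.
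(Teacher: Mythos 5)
Your proof is correct and rests on the same key identity as the paper's, namely Vieta's relation $e^{\rho_0\lambda}e^{\rho_1\lambda}=b/a$; the paper argues by contradiction (if $c=0$ the two roots of (\ref{2nd_degree}) would coincide, forcing $\Delta=0$), whereas you compute directly that $c=\rho_1-\rho_0>0$ using $\rho_0<0<\rho_1$. Your direct version is marginally more informative but is essentially the same argument.
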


\begin{proof}
If $c=0$ then $e^{2\rho_0 \lambda}=b/a=e^{\rho_0 \lambda}.e^{\rho_1 \lambda}$ and $e^{\rho_0 \lambda}=e^{\rho_1 \lambda}$, the 
root of the second degree equation would be double and the discriminant would be $\Delta =0$ but we have seen that $\Delta >0$.
\end{proof}

If we request that $F>0$ and $F(x)\to 0$ for $x\to +\infty$ (the only sound solutions) then $L=0$ and $M>0$,
$$
F(x) = e^{\rho_0 x} M(x/\lambda) \ .
$$
Finally we have
$$
f(x)=x^{\rho_0} M(\lambda^{-1}\log x ) \ .
$$
If we look for continuous solutions, then $M$ must be continuous and bounded since it is $\ZZ$-periodic,
thus $f$ satisfies Pareto assymptotics
$$
\lim_{x\to +\infty} -\frac{\log f(x)}{\log x} = - \rho_0 = \alpha >0 \ .
$$

\subsection{Pareto exponent.}
It is interesting that we can compute an explicit expression of the Pareto exponent in terms of the parameters $\kappa$, $\gamma$ and $p$,
\begin{corollary}
The Pareto exponent is given by
$$
\alpha = -\rho_0 = -\lambda^{-1} \log \left (\frac{1-\sqrt{1-4ab}}{2 a}\right )
$$
 or
$$
\alpha = 1- \frac{\log \left (\frac{\kappa-\sqrt{\kappa^2-4p(1-p)}}{2 (1-p)}\right )}{\log(1+\gamma) } \ .
$$
\end{corollary}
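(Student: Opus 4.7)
The plan is to identify $\alpha$ from the work already done, then substitute the parameters and simplify. By the analysis preceding the corollary, the Pareto exponent is $\alpha = -\rho_0$, where $\rho_0$ is determined by $e^{\rho_0\lambda} = x_1$ with $x_1$ the smaller root of $aX^2-X+b=0$. The quadratic formula already gave $x_1 = (1-\sqrt{1-4ab})/(2a)$, so the first displayed expression
$$
\alpha = -\lambda^{-1}\log\!\left(\frac{1-\sqrt{1-4ab}}{2a}\right)
$$
is immediate and needs no further argument.

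The substantive step is converting this to the parameter form. First I would substitute $a=(1-p)(1+\gamma)/\kappa$ and $b=p/(\kappa(1+\gamma))$ to get $4ab = 4p(1-p)/\kappa^2$, whence $\sqrt{1-4ab} = \sqrt{\kappa^2-4p(1-p)}/\kappa$. Combining with $2a = 2(1-p)(1+\gamma)/\kappa$, the $\kappa$ in the denominator and the one appearing under the radical cancel cleanly, leaving
$$
x_1 = \frac{\kappa - \sqrt{\kappa^2 - 4p(1-p)}}{2(1-p)(1+\gamma)}.
$$

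Finally, since $\lambda = \log(1+\gamma)$, I would split the logarithm as
$$
-\log x_1 = \log(1+\gamma) - \log\!\left(\frac{\kappa - \sqrt{\kappa^2 - 4p(1-p)}}{2(1-p)}\right),
$$
and divide by $\log(1+\gamma)$ to obtain the claimed expression with the leading $1$. I do not expect any real obstacle: every step is algebraic manipulation, and the only subtlety is keeping track of which root to pick, which is fixed by the condition $\rho_0 < 0$, equivalently $x_1 < 1$, matching the minus sign in the quadratic formula as used above.
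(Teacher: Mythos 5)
Your proposal is correct and matches the paper, which states this corollary as an immediate consequence of the preceding computation of $\rho_0=\lambda^{-1}\log x_1$ and gives no separate proof. Your substitution $4ab=4p(1-p)/\kappa^2$, the cancellation of $\kappa$, and the splitting off of $\log(1+\gamma)=\lambda$ to produce the leading $1$ are exactly the intended algebra, and your remark that the choice of root is pinned down by $x_1<1$ (equivalently $\rho_0<0$) is the right justification.
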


It is interesting to note that the Pareto exponent $\alpha$ decreases when $\gamma$ increases. This means that a more risky 
finantial behaviour, or more active economy, favours unequal distribution. Fortunes are created and lost more often. Ruin 
is more common. Indeed we know by the Kelly criterion \cite{Ke} that ruin is almost sure in the long run if $\gamma$ is larger than a certain threshold. 
With a slightly modified model we can explain Pareto's theory of ``Circulation of Elites''. Indeed this circulation occurs at all level 
of social status when the agents are not enough conservative to satisfy Kelly criterion. We will discuss these questions in a companion article \cite{PM}.

The Pareto exponent also increases with $\kappa$ since
$$
\frac{d\alpha}{d\kappa} =\frac{1}{\log(1+\gamma)} \frac{\kappa-\sqrt{\kappa^2-4p(1-p)}}{\sqrt{\kappa^2-4p(1-p)} \left (\kappa-\sqrt{\kappa^2-4p(1-p)}\right )} \ ,
$$
is positive. This is natural since a larger $\kappa$ means a larger demographic and fiscal pressure and thus we expect 
a better redistribution of wealth and a larger Pareto exponent. 

\subsection{A remarkable solution in the wealth preserving model.}

A Pareto exponent $\alpha >1$ is necessary for summability of the tail of the distribution and is always observed in experimental studies. 
It is remarkable that in the wealth preserving model with the critical value of the dissipative coefficient $\kappa =1$, 
the Pareto exponent is exactly $\alpha =1$.

\begin{theorem}
 In the wealth preserving model, $\kappa = 1$, the Pareto exponent is exactly equal to $\alpha =1$.
\end{theorem}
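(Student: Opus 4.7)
The plan is to specialize the explicit formula from the preceding Corollary to $\kappa=1$ and show that the argument of the logarithm collapses to $1$, forcing $\alpha=1$. Concretely, substituting $\kappa=1$ into $\kappa^{2}-4p(1-p)$ yields $1-4p+4p^{2}=(1-2p)^{2}$, so the square root simplifies to $|1-2p|$, and since the model assumes $p>1/2$ this equals $2p-1$. Then the numerator in the Corollary becomes $1-(2p-1)=2(1-p)$, which exactly cancels the denominator $2(1-p)$; the argument of the logarithm is $1$ and the logarithm vanishes, giving $\alpha=1$.

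The one subtlety I would want to flag is the sign choice in extracting $\sqrt{(1-2p)^{2}}$: it is crucial to invoke the earlier standing hypothesis $p>1/2$ (which was motivated by requiring a positive expectation of gain, so that agents actually wager). Without this, the cancellation would not be automatic.

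As a sanity check, I would also verify the statement directly at the level of the functional equation (\ref{functional_eq0}) with $\kappa=1$: substitute the candidate Pareto density $f(x)=C/x$ into
$$
f(x)=\frac{p}{1+\gamma}\,f\!\left(\tfrac{x}{1+\gamma}\right)+(1-p)(1+\gamma)\,f\!\bigl((1+\gamma)x\bigr),
$$
which gives $\frac{C}{x}=\frac{pC}{x}+\frac{(1-p)C}{x}=\frac{C}{x}$. Equivalently, in the exponential form of the eigenvalue equation (\ref{2nd_degree}), one checks that $x=1/(1+\gamma)=e^{-\lambda}$ is a root of $ax^{2}-x+b$ when $a=(1-p)(1+\gamma)$ and $b=p/(1+\gamma)$, which identifies $\rho_{0}=-1$ and hence $\alpha=-\rho_{0}=1$. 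This independent verification, together with the algebraic simplification of the Corollary's formula, completes the proof. I do not expect any genuine obstacle here; the only thing to be careful about is the sign of $1-2p$ when extracting the square root.
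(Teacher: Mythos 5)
Your proposal is correct and follows essentially the same route as the paper: substitute $\kappa=1$ into the Corollary's formula, observe $\kappa^2-4p(1-p)=(2p-1)^2$, and conclude that the argument of the logarithm collapses to $1$; your explicit appeal to $p>1/2$ when extracting $\sqrt{(2p-1)^2}=2p-1$ is a point the paper uses only implicitly, and is worth stating. The additional direct verification that $f(x)=C/x$ satisfies the fixed-point equation for $\kappa=1$ is a nice cross-check but not a different proof.
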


\begin{proof}
 For $\kappa=1$ we have
$$
\kappa^2-4p(1-p)=(2p-1)^2 \ .
$$
Therefore
$$
\kappa-\sqrt{\kappa^2-4p(1-p)} =2(1-p) \ .
$$
And the formula in the previous section gives $\alpha =1$. 
\end{proof}

This result is natural and to be expected: For $\kappa < 1$ the wealth in increasing without limit and the invariant distributions 
could not be summable at $+\infty$, and for $\kappa > 1$ we have finite wealth at $+\infty$. From the form of the invariant solutions, we have:

\begin{theorem}
 For an invariant solution, the following conditions are equivalent:
\begin{enumerate}
 \item The tail wealth is summable, $W(f, x_0) < +\infty$ .
 \item The Pareto exponent $\alpha$ is larger than $1$, $\alpha >1$.
 \item The model is wealth dissipative, that is $\kappa >\kappa_0$ .
\end{enumerate}
\end{theorem}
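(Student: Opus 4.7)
The plan is to leverage the explicit form of the invariant solutions derived in the preceding subsection, namely $f(x) = x^{\rho_0} M(\lambda^{-1} \log x)$ with $M$ continuous, positive and $\ZZ$-periodic, together with the closed formula for $\alpha = -\rho_0$ from the corollary. The critical dissipation threshold is $\kappa_0 = 1$, as indicated by the preceding theorem on the wealth-preserving model, so the three-way equivalence reduces to proving (1) $\Leftrightarrow$ (2) by elementary estimates of the tail integral, and (2) $\Leftrightarrow$ (3) by exploiting the monotonicity of $\alpha$ as a function of $\kappa$ which was computed earlier.

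For (1) $\Leftrightarrow$ (2), the key observation is that a continuous, positive, $\ZZ$-periodic function $M$ attains its minimum and maximum on $[0,1]$, so there exist constants $0 < m_1 \leq m_2 < +\infty$ with $m_1 \leq M \leq m_2$ on all of $\RR$. Sandwiching $f$ then gives
$$
m_1 \int_{x_0}^{+\infty} x^{-\alpha}\, dx \;\leq\; W(f,x_0) \;\leq\; m_2 \int_{x_0}^{+\infty} x^{-\alpha}\, dx ,
$$
and both bounds are finite if and only if $\alpha > 1$. This settles the equivalence between summability of the tail and $\alpha > 1$ in a single step.

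For (2) $\Leftrightarrow$ (3), I would start from the explicit formula
$$
\alpha(\kappa) \;=\; 1 \;-\; \frac{1}{\log(1+\gamma)}\,\log\!\left(\frac{\kappa - \sqrt{\kappa^{2} - 4p(1-p)}}{2(1-p)}\right) .
$$
The preceding theorem supplies the boundary value $\alpha(1) = 1$, and the derivative computed in the subsection on the Pareto exponent gives $d\alpha/d\kappa > 0$ on $[1,+\infty)$, since for $p > 1/2$ and $\kappa \geq 1$ one has $\kappa^{2} - 4p(1-p) \geq (2p-1)^{2} > 0$, so the radical and the quantity $\kappa - \sqrt{\kappa^{2}-4p(1-p)}$ are both strictly positive and smooth. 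Strict monotonicity together with $\alpha(1)=1$ forces $\alpha(\kappa) > 1 \Leftrightarrow \kappa > 1 = \kappa_0$, closing the loop.

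The main obstacle is of the bookkeeping kind rather than a genuine difficulty: one must verify that the sandwich bound in the first step really applies to the actual periodic factor $M$ (which requires its continuity and strict positivity, both inherited from the standing hypotheses on $f$ in the setup), and that the formula and its derivative remain well-defined at the critical value $\kappa=1$ (which they are, precisely because $p > 1/2$ ensures $4p(1-p) < 1$ and hence the discriminant stays strictly positive on the relevant range).
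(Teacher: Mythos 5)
Your proof is correct and follows exactly the route the paper intends: the paper offers no written proof beyond the phrase ``From the form of the invariant solutions, we have,'' and your argument supplies precisely the omitted details --- the sandwich bound $m_1 x^{-\alpha} \leq f(x) \leq m_2 x^{-\alpha}$ coming from the continuous positive $\ZZ$-periodic factor $M$ for (1) $\Leftrightarrow$ (2), and the boundary value $\alpha(1)=1$ combined with the strict monotonicity $d\alpha/d\kappa>0$ for (2) $\Leftrightarrow$ (3), with the correct identification $\kappa_0=1$. No gaps.
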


It has been observed that the Pareto exponent of the wealthiest fraction of the population has a Pareto exponent which is much closer to $1$ 
than expected (or to the rest of the medium class, whatever this means). So for this class of the population the dissipative coefficient 
is closer to the critical one $\kappa_0$, this means that the wealthiest part of the population is able to avoid the mechanisms of 
fiscal redistribution of wealth.

\subsection{Stability of invariant solutions.}
We now study the Pareto problem of stability of the Pareto distribution.

Since $\kappa >1$, we can observe that for the $L^1$-norm the operator $\cW_\kappa$ is contracting:
\begin{lemma}
Let $f, g : \RR_+^* \to \RR_+$ be measurable functions , with $f-g \in L^1(\RR_+^* )$, then 
$$
|| \cW_\kappa (f) -\cW_\kappa (g)||_{L^1} \leq \kappa^{-1} ||f-g||_{L^1} \ .
$$
\end{lemma}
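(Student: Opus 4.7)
The plan is to exploit the linearity of $\cW_\kappa$ so that the contraction estimate reduces to a one-function bound. Setting $h = f - g \in L^1(\RR_+^*)$, linearity gives $\cW_\kappa(f) - \cW_\kappa(g) = \cW_\kappa(h)$, so it suffices to show
$$
\|\cW_\kappa(h)\|_{L^1} \leq \kappa^{-1} \|h\|_{L^1}.
$$
Note this is stronger than we need since we would also have to justify that $\cW_\kappa$ applied to a signed function still makes sense, but the operator is defined as a linear combination of pullbacks so there is no issue.

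Next, I would apply the triangle inequality to the definition
$$
\cW_\kappa(h)(x) = \frac{p}{\kappa(1+\gamma)} h(x/(1+\gamma)) + \frac{(1-p)(1+\gamma)}{\kappa} h((1+\gamma)x),
$$
yielding
$$
\|\cW_\kappa(h)\|_{L^1} \leq \frac{p}{\kappa(1+\gamma)} \int_0^\infty |h(x/(1+\gamma))|\,dx + \frac{(1-p)(1+\gamma)}{\kappa} \int_0^\infty |h((1+\gamma)x)|\,dx.
$$
Then I would carry out the two affine changes of variables $u = x/(1+\gamma)$ and $v = (1+\gamma)x$ in the respective integrals. The first produces a factor $(1+\gamma)$ and the second a factor $(1+\gamma)^{-1}$, each multiplying $\|h\|_{L^1}$.

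Inserting these factors, the two $(1+\gamma)$ terms cancel exactly with the prefactors, leaving
$$
\|\cW_\kappa(h)\|_{L^1} \leq \frac{p}{\kappa}\|h\|_{L^1} + \frac{1-p}{\kappa}\|h\|_{L^1} = \frac{1}{\kappa}\|h\|_{L^1},
$$
which is the desired bound. There is no real obstacle here: the computation is entirely mechanical and the only structural input is that the two dilation coefficients in $\cW$ are reciprocals, which is precisely the content of the observation that $\cW$ (that is, $\kappa = 1$) is an $L^1$-isometry on nonnegative functions; the factor $\kappa^{-1}$ then appears tautologically from the normalization in the definition of $\cW_\kappa$.
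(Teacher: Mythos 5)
Your proof is correct and follows essentially the same route as the paper: the paper applies the pointwise triangle inequality to $\cW_\kappa(f)-\cW_\kappa(g)$ and integrates, which amounts to exactly the two changes of variables you carry out explicitly (your reduction to $h=f-g$ via linearity is just a cosmetic repackaging of the same computation). Your version is slightly more detailed in spelling out how the Jacobian factors cancel the prefactors, which the paper leaves implicit in ``the result follows integrating over $\RR_+^*$.''
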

\begin{proof}
We have
 \begin{align*}
 \left | \cW_\kappa (f) (x) -\cW_\kappa (g) (x) \right | \leq  & 
 \frac{p}{\kappa (1+\gamma)} \left |f(x/(1+\gamma)) -g(x/(1+\gamma))\right | \\ 
&+\frac{(1-p)(1+\gamma)}{\kappa}\left |f(x(1+\gamma)) -g(x(1+\gamma))\right | 
 \end{align*}
and the result follows integrating over $\RR_+^*$.
\end{proof}

Obviously this lemma is only interesting when $||f-g||_{L^1}$ is finite. 
For each invariant solution $f_0$ it is natural to consider the space of measurable bounded perturbations of $f_0$ for the $L^1$-norm, $\cM( \RR_+^*, \RR )$ denotes
the space of Borel measurable functions,
$$
\cS_{f_0} = \{ g \in \cM( \RR_+^*, \RR )  ; ||g-f_0||_{L^1}<+\infty \} \ .
$$
Then the fixed point $f_0$ is a global attractor in $\cS_{f_0}$ and we have:

\begin{theorem}
 For any $g \in \cS_{f_0}$, we have that $\cW_\kappa^n (g) \to f_0$ for the $L^1$-norm at a geometric rate.
\end{theorem}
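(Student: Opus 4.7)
The plan is to deduce the theorem directly from the contraction lemma that immediately precedes it, by the standard Banach-type fixed-point argument. Since $f_0$ is an invariant distribution, $\cW_\kappa(f_0)=f_0$, and hence by induction $\cW_\kappa^n(f_0)=f_0$ for every $n\geq 0$.

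The key step is to iterate the contraction estimate. For $g\in\cS_{f_0}$ we have $g-f_0\in L^1(\RR_+^*)$, so the lemma gives
$$
\|\cW_\kappa(g)-f_0\|_{L^1} = \|\cW_\kappa(g)-\cW_\kappa(f_0)\|_{L^1} \leq \kappa^{-1}\|g-f_0\|_{L^1}.
$$
In particular $\cW_\kappa(g)-f_0\in L^1(\RR_+^*)$, so $\cW_\kappa(g)\in\cS_{f_0}$, and the hypothesis of the lemma is again satisfied at the next step. A straightforward induction on $n$ then yields
$$
\|\cW_\kappa^n(g)-f_0\|_{L^1} \leq \kappa^{-n}\|g-f_0\|_{L^1}.
$$

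Since the ambient model here is the genuinely dissipative case ($\kappa>1$, consistent with the use of strict contraction in the preceding lemma and with the non-degenerate conclusions of the earlier sections), we have $\kappa^{-1}<1$, and the bound above tends to zero geometrically with ratio $\kappa^{-1}$. This gives convergence $\cW_\kappa^n(g)\to f_0$ in $L^1$ at the claimed geometric rate.

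There is essentially no obstacle: the only thing to be careful about is verifying that the iterates remain in $\cS_{f_0}$ so that the contraction lemma applies at each stage, but this is immediate from the contraction inequality itself. The boundary case $\kappa=1$ (wealth-preserving model) is not covered by this argument, as expected, since then $\cW_\kappa$ is only non-expansive and one no longer has a geometric rate.
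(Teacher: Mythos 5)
Your proof is correct and follows exactly the route the paper intends: the paper states the theorem as an immediate consequence of the preceding contraction lemma (with the iteration left implicit), and your argument simply makes that iteration explicit, checking that the iterates stay in $\cS_{f_0}$ and using $\kappa>1$ to get the geometric ratio $\kappa^{-1}$. No divergence from the paper's approach and no gap.
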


This proves the Pareto stability conjecture, exactly as stated by Pareto (see the citation in the introduction): If we remove all wealth larger 
than some value $x$ from the invariant solution, then the perturbation thus obtained is $L^1$ bounded because of summability of the tail, hence the 
stability.

\section{Other more refined models.}

With the same ideas, we can build more sophisticated models that will be studied in the future. The main difference with the model 
presented here is that the invariant solutions cannot be computed explicitely in general, nor we can give close formulas for the Pareto 
exponents. But this does not prevent numerical studies of the invariant solutions.

We may more realistically assume that there are different sorts of individuals with different skills for finantial investment (different $p$'s), 
and different risk profiles (different $\gamma$'s). If we assume that each class of individuals 
is equally represented accross wealth classes (which is not true, the more skilled ones should be more numerous in the upper classes), then we end with a general 
wealth operator of the form
$$
\cW_\kappa (f) =\sum_i \frac{p_i (1+\gamma_i)}{\kappa} f(x/(1+\gamma_i)) + \frac{q_i (1+\gamma_i)}{\kappa} f(x(1+\gamma_i)) \ ,
$$ 
with 
$$
\sum_i p_i +\sum_i q_i =1 \ .
$$
The exponentials of the Pareto exponents appear then as roots of a Dirichlet polynomial. One can prove, using results 
from \cite{S} that the invariant solutions obey Pareto law.

A more realistic model consists in allowing the dissipative coefficient $\kappa$ to be non constant and make
it dependent on $x$. In principle, $x\mapsto \kappa (x)$ should be increasing. Then the search for invariant solutions leads to a functional equation with non-constant
coefficients whose possible explicit resolution depends on the form of the function $x\mapsto \kappa (x)$.

\noindent \textbf{Acknowledgements.} I thank my colleague Philippe Marchal for pointing out an error in the formula of the first version of this article.

\end{document}